\title{\textbf{A Game-Theoretic Foundation for Bitcoin's Price: \\ A Security-Utility Equilibrium}}
\author{Liang Chen\thanks{Affiliation: Powsell Hong Kong Limited. This manuscript formalizes and substantially extends a conceptual framework developed by the author.}}
\date{August 2025}
\theoremstyle{plain}
\newtheorem{theorem}{Theorem}
\theoremstyle{definition}
\newtheorem{definition}{Definition}
\theoremstyle{remark}
\begin{document}

\maketitle

\begin{abstract}
\noindent The valuation of decentralized digital assets like Bitcoin presents a significant challenge to standard asset pricing paradigms, with explanations often defaulting to speculative belief dynamics. This paper develops a structural game-theoretic model where Bitcoin's price is determined in a general equilibrium of strategic interactions among security suppliers (miners), asset demanders (users and speculators), and potential attackers. We define and analyze a \textbf{Rational-Expectations Security-Utility Nash Equilibrium (RESUNE)}, a fixed point where the market-clearing price simultaneously induces a free-entry equilibrium in the market for computational power (hash rate), which in turn endogenously determines the ledger's security. Network security, defined as one minus the probability of a successful 51\% attack, is micro-founded using a global games model of coordination among potential attackers. This approach resolves the equilibrium selection problem and yields a unique, continuous security function dependent on price and hashrate. We prove the existence of a RESUNE and establish a sufficient condition for its uniqueness and local stability, which has a clear economic interpretation: the stabilizing direct effect of price on demand must dominate the potentially destabilizing indirect feedback from price to security. The framework generates sharp, falsifiable predictions regarding the systemic impact of exogenous shocks, such as the protocol-mandated halving of block subsidies. We specify a structural Vector Autoregression (VAR) model to test the theory's core mechanism, hypothesizing that a halving, \textit{ceteris paribus}, initiates a contraction in both hashrate and price. This model provides a coherent foundation for decomposing Bitcoin's value into components attributable to transactional utility, security guarantees, and speculative premia, and rationalizes the empirically observed unidirectional causality from price to hashrate.
\vspace{1cm}
\\
\textbf{Keywords:} Bitcoin, Crypto-economics, Asset Pricing, Game Theory, Global Games, Endogenous Security, Market Microstructure.
\\
\textbf{JEL Codes:} G12, D82, D84, L13.
\end{abstract}

\newpage

\section{Introduction}

The valuation of Bitcoin poses a fundamental puzzle to modern finance. Lacking a traditional cash-flow stream, a claim on assets, or a sovereign guarantee, its price determination mechanism remains contentious in the academic literature. While some researchers attribute Bitcoin's value primarily to narrative-driven speculation or behavioral factors, such explanations overlook the sophisticated industrial organization that secures the network. Bitcoin's value proposition is intrinsically linked to its security — the immutability of its ledger — which is not an exogenous parameter but an endogenous outcome of a strategic ecosystem \citep{budish2018, biais2019}. Miners expend real resources to provide computational security (hash rate) in exchange for rewards whose value is denominated in the very asset they secure. Asset holders, in turn, demand the asset based on its transactional utility and their perception of this endogenously produced security.

This paper formalizes this fundamental feedback loop. We posit that the price of Bitcoin is determined as a \textbf{Rational-Expectations Security-Utility Nash Equilibrium (RESUNE)}. This equilibrium is a fixed point where the asset price must satisfy two simultaneous conditions: (i) it must clear the market for the existing stock of coins, where demand is a function of price and perceived security; and (ii) it must be consistent with the "zero-profit" entry condition for the marginal miner, whose collective actions generate the very security that underpins the asset's demand. This approach provides a structural, micro-founded valuation framework that anchors price in the economics of security production, moving beyond reduced-form valuation heuristics such as the stock-to-flow model or network-to-transaction ratios.

Our primary contributions are fourfold. First, we define the RESUNE and formally model the strategic interdependence between profit-maximizing miners, utility-maximizing asset holders, and opportunistic attackers. This integrates disparate elements of the cryptocurrency ecosystem into a single, coherent general equilibrium framework.

Second, we deliver a micro-founded model of network security. The supply of hash rate arises from a competitive, free-entry market populated by a continuum of heterogeneous miners, reflecting the diverse cost structures observed in the mining industry \citep{cong2021}. More critically, the probability of a systemic 51\% attack is derived from a global game of coordination among potential attackers, following \citet{morris1998}. This avoids ad-hoc assumptions about attack costs and probabilities, embedding coordination risk directly into the security calculus and yielding a unique, continuous security function.

Third, we provide a rigorous theoretical analysis of the equilibrium. We prove the existence of a RESUNE using a fixed-point argument. We then establish a sufficient condition for its uniqueness and local stability. This condition has a clear economic interpretation, weighing the stabilizing force of standard price-elastic demand against a potentially destabilizing feedback loop where higher prices induce higher security, which in turn boosts demand. When this feedback channel is too strong, multiple equilibria and endogenous fragility can emerge, providing a structural explanation for the "death spirals" and systemic risk amplification observed in crypto markets.

Finally, our framework generates sharp, falsifiable empirical predictions. We utilize the Bitcoin protocol's scheduled "halving" of block subsidies as a clean, exogenous negative shock to the security supply side. Our model predicts that a halving, in isolation, is a fundamentally contractionary event for the network's economic value, leading to a new equilibrium with both a lower price and a lower hashrate. This prediction stands in contrast to popular market narratives of a "halving rally," which we argue are driven by concurrent, and potentially confounding, speculative demand shocks. We specify a structural Vector Autoregression (VAR) model to empirically disentangle these effects and test the theory's core mechanism. This approach also provides a theoretical rationalization for the robust empirical finding of a unidirectional causality from price to hashrate.

This paper proceeds as follows. Section 2 reviews the related literature. Section 3 presents the formal model. Section 4 analyzes the equilibrium's existence and uniqueness. Section 5 derives the comparative statics of a halving shock and outlines a testable empirical strategy. Section 6 discusses the model's implications for endogenous fragility and the long-term security budget. Section 7 concludes.

\section{Related Literature}

This paper builds on and contributes to three distinct strands of academic literature.

\subsection{The Economics of Proof-of-Work and Blockchain Security}
The foundation of Bitcoin's security is its Proof-of-Work (PoW) consensus mechanism \citep{nakamoto2008}. A critical literature has analyzed its economic limitations. \citet{budish2018} argues that because an attacker can potentially steal funds via a double-spend in addition to earning block rewards on their dishonest chain, there is an "attacker advantage" that makes securing the ledger against a 51\% attack prohibitively expensive. 

A second major concern is the long-term security budget. As the programmed block subsidy declines, the network must increasingly rely on transaction fees to incentivize miners. Several papers question the viability of a purely fee-based security model. \citet{easley2019} and \citet{huberman2021} model the transaction fee market as a strategic game, highlighting potential market failures. The "tragedy of the common chain" argument suggests that individual users have an incentive to free-ride on the security provided by others' fees, leading to an under-provision of security in equilibrium. Our model contributes to this literature by embedding the security production function directly into an asset pricing framework, allowing us to analyze how the asset's price and the security budget are jointly determined and how their relationship evolves as the block subsidy changes.

\subsection{Game-Theoretic and Microstructure Models of Cryptocurrency}
A growing literature uses game theory to model the strategic interactions within the crypto ecosystem. This includes dynamic games of miner competition \citep{saleh2021}, strategic behavior in transaction fee auctions \citep{easley2019}, and the industrial organization of mining pools, which act as intermediaries for individual miners \citep{cong2021}. \citet{biais2019} model the blockchain protocol as a stochastic game and show that while mining the longest chain is a Markov perfect equilibrium, multiple equilibria exist, including those with persistent forks. Our work shares this game-theoretic approach but integrates these elements—miner entry, user demand, and attacker strategy—into a unified general equilibrium model that solves for the asset's price, which is typically taken as exogenous in more focused microstructure models. Our use of a global games framework to model attack coordination is, to our knowledge, a novel application in this context, providing a rigorous microfoundation for the network's security level.

\subsection{Empirical Determinants of Cryptocurrency Value}
A large body of empirical work has sought to identify the drivers of cryptocurrency prices. Early work focused on factors analogous to traditional assets, often with limited success \citep{borri2019}. More recent studies have leveraged the transparency of blockchain data to construct "on-chain" factors. Proxies for network adoption and utility, such as the number of active addresses or transaction volumes, have been shown to have explanatory power.

Most relevant to our paper is the literature examining the relationship between price and hashrate, the latter being a direct measure of the computational resources dedicated to security. A robust finding across multiple studies is a unidirectional Granger causality running from price to hashrate, with lags typically ranging from one to six weeks. Our model provides a direct theoretical rationalization for this empirical regularity. The equilibrium price $P$ directly determines miner revenue, thus affecting the hash rate supply function $H(P)$. The reverse causality, from hashrate to price, operates through an indirect and potentially weaker channel: hashrate affects security, which in turn affects demand and thus price. Our framework formalizes this price-led dynamic.

\section{The Model}

The economy consists of four classes of risk-neutral agents: a continuum of potential miners, a continuum of asset users, a group of speculators, and a representative potential attacker. The total fixed supply of the asset (Bitcoin) is denoted by $Q$.

\begin{table}[!ht]
\centering
\caption{Model Variables and Parameters}
\label{tab:variables}
\begin{tabular}{@{}ll@{}}
\toprule
\textbf{Symbol} & \textbf{Definition} \\ \midrule
\multicolumn{2}{l}{\textit{Endogenous Variables}} \\
$P$ & Price of one unit of the crypto-asset (e.g., Bitcoin) \\
$H$ & Aggregate hash rate of the network \\
$\pi^A$ & Probability of a successful 51\% attack \\
$\varsigma$ & Perceived network safety, $\varsigma = 1 - \pi^A$ \\
$D_U$ & Asset demand from users \\
$D_S$ & Asset demand from speculators \\
$D$ & Total asset demand, $D = D_U + D_S$ \\
\midrule
\multicolumn{2}{l}{\textit{Exogenous Parameters and Functions}} \\
$Q$ & Fixed total supply of the asset \\
$B$ & Block subsidy reward (in units of the asset) \\
$\Phi$ & Average transaction fees per block (in units of the asset) \\
$c_i$ & Marginal cost of hash rate production for miner $i$ \\
$F(c)$ & CDF of miner costs \\
$gP$ & Gross payoff to the attacker from a successful attack \\
$kH$ & Cost to the attacker of mounting an attack \\
$\sigma_\varepsilon$ & Noise in the attacker's private signal about profitability \\
$\theta_U$ & User demand scale parameter (network effect) \\
$\epsilon$ & Price elasticity of user demand \\
$\theta_S$ & Speculator demand scale parameter \\
$\delta$ & Baseline risk cost for speculators \\ \bottomrule
\end{tabular}
\end{table}

\subsection{Miners and the Supply of Hash Rate}
A continuum of potential miners, indexed by $i$, is characterized by a constant marginal cost $c_i$ to produce one unit of hash rate. Costs are distributed according to a strictly increasing and continuous cumulative distribution function (CDF) $F(c)$ over the support $[c_{\min}, c_{\max}]$.

Miners are price-takers in a competitive market for hash rate. The revenue per unit of hash rate, $R$, is determined by the Bitcoin price $P$, the block subsidy $B$, and average transaction fees per block $\Phi$. This total block reward is shared among all miners in proportion to their contribution to the total hash rate $H$. Thus, the revenue per unit of hash is:
\begin{equation}
 R(P, H) = P \cdot \frac{B + \Phi}{H} \label{eq:revenue_model}
\end{equation}
A miner $i$ will be active if and only if $c_i \le R(P, H)$. Under free entry, miners enter until the marginal miner is indifferent. The cost of this marginal miner, $c_H$, must equal the revenue: $c_H = R(P, H)$. The aggregate hash rate $H$ is the measure of all active miners, $H = F(c_H)$. Substituting these conditions, the equilibrium supply of hash rate is implicitly defined by:
\begin{equation}
 F^{-1}(H) = P \cdot \frac{B + \Phi}{H} \label{eq:hashsupply_model}
\end{equation}
where $F^{-1}(\cdot)$ is the inverse CDF. Equation \eqref{eq:hashsupply_model} implicitly defines an aggregate hash rate supply function, $H = H(P)$, which is continuous and strictly increasing in price ($dH/dP > 0$).

\subsection{The Attacker and Endogenous Security}
Network security is its resilience to a 51\% attack. An attacker's gross payoff from an attack is $G(P) = gP$. The cost of attack is $K(H) = kH$. Following \citet{morris1998}, we assume the attacker is uncertain about the net profitability, $\Pi(P, H) = gP - kH$. They observe a private signal $s = \Pi(P, H) + \varepsilon$, where $\varepsilon \sim \mathcal{N}(0, \sigma_{\varepsilon}^2)$. This leads to a unique threshold strategy: attack if $s \ge s^*$. The attack probability is:
\begin{equation}
\pi^A(P, H) = \Pr(s \ge s^*) = 1 - \Psi\left(\frac{s^* - (gP - kH)}{\sigma_\varepsilon}\right) \label{eq:attackprob_model}
\end{equation}
where $\Psi(\cdot)$ is the standard normal CDF. We define **perceived network safety** as $\varsigma(P, H) = 1 - \pi^A(P, H)$. Safety $\varsigma$ is strictly increasing in $H$ and strictly decreasing in $P$.

\subsection{Users, Speculators, and Asset Demand}
Total demand $D(P, \varsigma)$ for the asset supply $Q$ comes from users ($D_U$) and speculators ($D_S$).
User demand is increasing in safety $\varsigma$ and decreasing in price $P$:
\begin{equation}
 D_U(P, \varsigma) = \theta_U \cdot \varsigma \cdot P^{-\epsilon} \label{eq:userdemand_model}
\end{equation}
where $\theta_U > 0$ captures network adoption and $\epsilon > 0$ is price elasticity.
Speculator demand is modeled in reduced form as increasing in safety and decreasing in price:
\begin{equation}
 D_S(P, \varsigma) = (\theta_S \cdot \varsigma - \delta) \cdot P^{-1} \label{eq:speculatordemand_model}
\end{equation}
where $\theta_S > 0$ reflects speculative interest and $\delta$ is a baseline risk parameter.
Total asset demand is $D(P, \varsigma) = D_U(P, \varsigma) + D_S(P, \varsigma)$.

\subsection{Rational-Expectations Security-Utility Nash Equilibrium (RESUNE)}
\begin{definition}
A \textbf{RESUNE} is a price $P^* > 0$ and hash rate $H^* > 0$ such that:
\begin{enumerate}
    \item \textbf{Miner Profit Maximization:} $H^* = H(P^*)$ satisfies Equation \eqref{eq:hashsupply_model}.
    \item \textbf{Asset Market Clearing:} $D(P^*, \varsigma(P^*, H^*)) = Q$.
\end{enumerate}
The equilibrium price $P^*$ is a solution to the fixed-point equation:
\begin{equation}
 D(P^*, \varsigma(P^*, H(P^*))) = Q \label{eq:equilibrium_model}
\end{equation}
\end{definition}

\section{Equilibrium Analysis}

\subsection{Existence of Equilibrium}
\begin{theorem}[Existence]
A RESUNE $(P^*, H^*)$ exists, provided that $\lim_{P\to 0} D(P, \varsigma(P, H(P))) > Q$ and $\lim_{P\to\infty} D(P, \varsigma(P, H(P))) = 0$.
\end{theorem}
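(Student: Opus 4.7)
The plan is to reduce the two-dimensional fixed-point problem to a one-dimensional root-finding problem on $(0,\infty)$ and invoke the Intermediate Value Theorem. Because the miner free-entry condition \eqref{eq:hashsupply_model} pins down a unique hash rate $H(P)$ for each $P>0$ (the left-hand side $F^{-1}(H)$ is strictly increasing in $H$ while the right-hand side $P(B+\Phi)/H$ is strictly decreasing), a RESUNE is equivalent to a zero of the excess-demand function
\[
E(P) \;=\; D\bigl(P,\,\varsigma(P, H(P))\bigr) - Q,
\]
with $H^* = H(P^*)$ then recovered mechanically from \eqref{eq:hashsupply_model}. Any zero of $E$ on $(0,\infty)$ immediately delivers a candidate equilibrium pair.

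The steps I would carry out, in order, are: (i) verify continuity of $H(P)$ on $(0,\infty)$ via the implicit function theorem, using the strict monotonicity and continuity of $F$ established in Section 3.1; (ii) observe that $\varsigma(P,H)$ inherits continuity from composing the standard normal CDF $\Psi$ with the linear argument $(s^* - gP + kH)/\sigma_\varepsilon$, so $\varsigma(P,H(P))$ is continuous in $P$; (iii) note that $D_U$ and $D_S$ are continuous in $(P,\varsigma)$ for $P>0$, so $E$ is continuous on $(0,\infty)$. The two boundary hypotheses of the theorem then yield $\lim_{P\to 0^+} E(P) > 0$ and $\lim_{P\to\infty} E(P) = -Q < 0$. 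Continuity and the Intermediate Value Theorem produce a $P^* \in (0,\infty)$ with $E(P^*)=0$, and setting $H^* = H(P^*)$ completes the equilibrium pair.

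The main obstacle, in my view, is justifying the continuity of the composite map $P \mapsto H(P)$ cleanly and ruling out boundary degeneracies in the implicit relation. Because the cost support $[c_{\min}, c_{\max}]$ is compact and $F$ is strictly increasing, one must verify that the unique intersection between $F^{-1}(H)$ and $P(B+\Phi)/H$ is interior for every finite $P>0$, so that $H(P)$ neither collapses to $0$ nor saturates at $F(c_{\max})$; a simple monotonicity argument on the admissible range of $H$ suffices, but this is the spot where implicit assumptions on $F$ are being quietly used. Once this is in hand, the remainder is a routine continuity-plus-IVT exercise. It is worth emphasizing that this argument guarantees only existence, not uniqueness: the destabilizing price$\to$security$\to$demand feedback channel highlighted in the introduction can in principle render $E$ non-monotone and admit multiple roots, which motivates the separate stability condition developed in the sequel.
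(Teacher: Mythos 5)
Your proposal is correct and follows essentially the same route as the paper's proof: reduce to the scalar excess-demand function, establish continuity of $P \mapsto H(P)$ via the implicit free-entry condition and of the composite demand map, then apply the Intermediate Value Theorem using the stated boundary conditions. Your added remark about verifying that the free-entry intersection stays interior to the cost support is a legitimate refinement the paper glosses over, but it does not change the argument's structure.
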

\begin{proof}
Let the excess demand function be $\mathcal{Z}(P) = D(P, \varsigma(P, H(P))) - Q$. The function is continuous for $P>0$. The boundary conditions ensure $\mathcal{Z}(P)$ is positive for small $P$ and negative for large $P$. By the Intermediate Value Theorem, there must exist at least one $P^* \in (0, \infty)$ where $\mathcal{Z}(P^*) = 0$. The detailed proof is in Appendix A.
\end{proof}

\subsection{Uniqueness and Stability}
The uniqueness of the equilibrium depends on the slope of the aggregate demand curve. Differentiating demand with respect to price yields:
$$ \frac{dD}{dP} = \underbrace{\frac{\partial D}{\partial P} + \frac{\partial D}{\partial \varsigma}\frac{\partial \varsigma}{\partial P}}_{\text{Direct Price Effect (-) }} + \underbrace{\frac{\partial D}{\partial \varsigma}\frac{\partial \varsigma}{\partial H}\frac{dH}{dP}}_{\text{Indirect Security-Feedback Effect (+) }} $$
The Direct Price Effect is stabilizing (negative), while the Indirect Security-Feedback Effect is potentially destabilizing (positive).

\begin{theorem}[Uniqueness and Stability]
The RESUNE is unique and locally stable if the stabilizing Direct Price Effect dominates the destabilizing Indirect Security-Feedback Effect for all $P > 0$:
\begin{equation}
 \left| \frac{\partial D}{\partial P} + \frac{\partial D}{\partial \varsigma}\frac{\partial \varsigma}{\partial P} \right| > \frac{\partial D}{\partial \varsigma}\frac{\partial \varsigma}{\partial H}\frac{dH}{dP} \label{eq:uniqueness_model}
\end{equation}
\end{theorem}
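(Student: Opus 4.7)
The plan is to show that the sufficient condition \eqref{eq:uniqueness_model} forces the excess demand function $\mathcal{Z}(P) := D(P, \varsigma(P, H(P))) - Q$ introduced in the existence proof to be strictly monotone decreasing on $(0, \infty)$. Combined with the boundary behavior already invoked in Theorem~1 and the Intermediate Value Theorem, strict monotonicity immediately upgrades the existing root to a unique root. Local stability around $P^*$ then falls out of the same derivative sign by considering a Walrasian-type adjustment dynamic.

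The first step is to compute $d\mathcal{Z}/dP$ via the chain rule through the composition $\varsigma(P, H(P))$, which is permissible because $H(P)$ is continuously differentiable by the implicit function theorem applied to \eqref{eq:hashsupply_model} (the right-hand side is smooth and monotone in $H$), and $\varsigma$ is smooth by construction from the normal CDF in \eqref{eq:attackprob_model}. This yields precisely the decomposition displayed in the statement. Next I would fix the signs of each partial: $\partial D/\partial P < 0$ from the explicit $P^{-\epsilon}$ and $P^{-1}$ factors in \eqref{eq:userdemand_model}--\eqref{eq:speculatordemand_model}; $\partial D/\partial \varsigma > 0$ from the same equations; $\partial \varsigma/\partial P < 0$ and $\partial \varsigma/\partial H > 0$ directly from \eqref{eq:attackprob_model}; and $dH/dP > 0$ already noted after \eqref{eq:hashsupply_model}. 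Thus the Direct Price Effect bracket is unambiguously negative and the Indirect Security-Feedback Effect is unambiguously positive, so the hypothesis \eqref{eq:uniqueness_model} is exactly the statement $d\mathcal{Z}/dP < 0$ on $(0,\infty)$.

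For uniqueness, strict monotonicity together with the limits $\mathcal{Z}(P) > 0$ as $P \to 0$ and $\mathcal{Z}(P) < 0$ as $P \to \infty$ from Theorem~1 yields a unique crossing, and $H^* = H(P^*)$ is then pinned down by \eqref{eq:hashsupply_model}. For local stability, I would posit the standard tâtonnement $\dot{P} = \lambda \mathcal{Z}(P)$ with $\lambda > 0$; linearizing around $P^*$ gives $\dot{p} \approx \lambda (d\mathcal{Z}/dP)\big|_{P^*} \, p$, whose eigenvalue is negative by the same inequality, so $P^*$ is asymptotically stable. I would note that any adjustment process monotone in excess demand delivers the same conclusion, so the stability claim is robust to the particular specification.

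The main obstacle, in my view, is not the differentiation itself but the justification that the global condition \eqref{eq:uniqueness_model} can be verified from the primitives. One should confirm that $dH/dP$ stays bounded (using $F$ strictly increasing so that $F^{-1}$ has a well-defined derivative on the relevant range) and that $\partial \varsigma / \partial H$, which involves the normal density $\psi$ evaluated at a $P$-dependent argument, does not blow up relative to the direct demand response. A cleaner presentation might recast the condition as an elasticity inequality $|\eta_{D,P}| + |\eta_{D,\varsigma}\eta_{\varsigma,P}| > \eta_{D,\varsigma}\eta_{\varsigma,H}\eta_{H,P}$, which makes transparent that the result is a standard gross-substitutes-type dominance condition rather than a delicate curvature requirement.
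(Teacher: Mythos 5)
Your proposal is correct and follows essentially the same route as the paper's Appendix A proof: differentiate the composite demand $D(P,\varsigma(P,H(P)))$ by the chain rule, sign the partials, and observe that condition \eqref{eq:uniqueness_model} is exactly the statement that the aggregate demand curve is strictly decreasing, so it can cross the vertical supply $Q$ at most once. In fact your write-up is slightly more complete than the paper's: the appendix proves only uniqueness and never substantiates the ``locally stable'' part of the theorem, whereas your t\^atonnement linearization $\dot{p} \approx \lambda\,(d\mathcal{Z}/dP)|_{P^*}\,p$ supplies that missing step, and your closing remark that the global condition should really be verified from primitives (boundedness of $dH/dP$ and of the normal-density term in $\partial\varsigma/\partial H$) identifies a genuine gap the paper leaves unaddressed.
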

\begin{proof}
This condition ensures that the total derivative $dD/dP$ is negative, so the aggregate demand curve is globally downward sloping and can intersect the vertical supply curve $Q$ only once. The detailed proof is in Appendix A.
\end{proof}
If this condition is violated, multiple equilibria can exist, creating endogenous fragility and the possibility of self-fulfilling "death spirals."

\section{Comparative Statics and Empirical Implications}

We analyze the "halving" event, where the block subsidy $B$ is cut by 50\%.

\subsection{Analysis of a Halving Shock}
\textit{Ceteris paribus}, a halving is a negative shock to miner revenue.
\begin{enumerate}
    \item \textbf{Direct Effect on Mining:} The hash supply function $H(P)$ shifts left. A higher price is now needed to sustain any given level of hashrate.
    \item \textbf{Effect on Security and Demand:} For any given price, the hashrate is lower, which reduces network safety $\varsigma$. This shifts the aggregate demand curve down and to the left.
    \item \textbf{New Equilibrium:} The new equilibrium $(P^{**}, H^{**})$ features a lower price and lower hashrate.
\end{enumerate}
This core prediction—that a halving is fundamentally contractionary for security and value—contrasts with popular market narratives. We argue that observed "halving rallies" are likely driven by confounding speculative demand shocks fueled by the narrative of scarcity, which can temporarily overwhelm the negative fundamental effect we model.

\subsection{A Falsifiable VAR Specification}
We propose a structural VAR model with the vector of endogenous variables $\mathbf{y}_t = [\log(P_t), \log(H_t), \log(\Phi_t)]'$, measured at a weekly frequency. The halving is an identified structural shock to miner revenue.

\begin{table}[!ht]
\centering
\caption{Predicted Impulse Responses to a Halving Shock}
\label{tab:var_predictions}
\begin{tabular}{@{}lccc@{}}
\toprule
\textbf{Variable} & \textbf{Short-Run} & \textbf{Long-Run} & \textbf{Rationale} \\ \midrule
$\log(H_t)$ & Negative & Negative & Immediate drop in revenue forces marginal miners offline. \\
\addlinespace
$\log(P_t)$ & Negative & Negative & Lower security reduces asset demand and price. \\
\addlinespace
$\log(\Phi_t)$ & Ambiguous & Positive & Fees must eventually rise to compensate for lost subsidy. \\ \bottomrule
\end{tabular}
\end{table}

\noindent\textbf{Hypothesis:} An identified negative shock to miner revenue will generate the impulse responses in Table \ref{tab:var_predictions}. An empirical finding that price robustly rises post-halving without a corresponding surge in fees or an identifiable exogenous demand shock would challenge this model's core mechanism.

\section{Discussion}

\subsection{Endogenous Fragility and Systemic Risk}
The model provides a structural foundation for the extreme fragility of crypto-asset markets. When the stability condition is violated, the system is prone to self-fulfilling shifts between "good" (high-price, high-security) and "bad" (low-price, low-security) equilibria. This feedback loop can amplify negative shocks, providing a micro-foundation for market crashes and "deleveraging spirals" observed in cryptocurrency markets.

\subsection{The Long-Term Security Budget}
As $B \to 0$, miner revenue becomes entirely dependent on fees $\Phi$. For the network to remain secure, the fee market must become sufficiently large and robust to incentivize a hashrate $H^*$ that provides adequate security to justify the price $P^*$. Whether the demand for block space can generate such fees is a critical open question for Bitcoin's future viability.

\section{Conclusion}

This paper has developed a game-theoretic model where Bitcoin's price is determined as a RESUNE, structurally anchored by the endogenously supplied level of network security. The framework formalizes the price-hashrate nexus, provides a clear, testable mechanism for systemic shocks like the halving, and offers a lens for analyzing the network's long-term economic viability and inherent fragility. By providing a disciplined and falsifiable foundation for understanding decentralized value creation, this work moves the conversation beyond purely narrative-based explanations toward a more rigorous economic analysis.

\newpage

\bibliographystyle{plainnat}

\appendix
\section{Proofs}

\subsection{Proof of Theorem 1 (Existence)}
As outlined in the main text, the proof relies on the Intermediate Value Theorem applied to the excess demand function $\mathcal{Z}(P) = D(P, \varsigma(P, H(P))) - Q$.

The function $H(P)$ is defined implicitly by $H \cdot F^{-1}(H) = P(B+\Phi)$. The left-hand side is a strictly increasing function of $H$. The right-hand side is strictly increasing in $P$. Therefore, for each $P > 0$, there is a unique $H > 0$. By the Implicit Function Theorem, $H(P)$ is a continuous function.

The functions $\varsigma(P, H)$ and $D(P, \varsigma)$ are continuous by construction. Therefore, the composite function $\mathcal{Z}(P)$ is continuous for $P \in (0, \infty)$.

The boundary conditions are:
\begin{enumerate}
    \item As $P\to 0$, $P(B+\Phi) \to 0$, which implies $\lim_{P\to 0} H(P) = 0$. The price effect on user demand ($P^{-\epsilon}$) dominates, causing demand to become arbitrarily large. Thus, $\lim_{P\to 0} D(P, \varsigma(P, H(P))) = \infty > Q$.
    \item As $P \to \infty$, both $P^{-\epsilon} \to 0$ and $P^{-1} \to 0$. Therefore, $\lim_{P\to\infty} D(P, \varsigma(P, H(P))) = 0 < Q$.
\end{enumerate}
Since $\mathcal{Z}(P)$ is continuous and changes sign from positive to negative over the interval $(0, \infty)$, there must exist at least one $P^* \in (0, \infty)$ such that $\mathcal{Z}(P^*) = 0$.

\subsection{Proof of Theorem 2 (Uniqueness and Stability)}
As argued in the main text, the uniqueness of the equilibrium is guaranteed if the aggregate demand function $D_{agg}(P) \equiv D(P, \varsigma(P, H(P)))$ is monotonically decreasing in $P$. The slope is given by:
$$ \frac{dD_{agg}}{dP} = \frac{\partial D}{\partial P} + \frac{\partial D}{\partial \varsigma}\left( \frac{\partial \varsigma}{\partial P} + \frac{\partial \varsigma}{\partial H}\frac{dH}{dP} \right) $$
The signs of the partial derivatives are: $\frac{\partial D}{\partial P} < 0$, $\frac{\partial D}{\partial \varsigma} > 0$, $\frac{\partial \varsigma}{\partial P} < 0$, $\frac{\partial \varsigma}{\partial H} > 0$, and $\frac{dH}{dP} > 0$.
Substituting these signs, we can rewrite the total derivative as:
$$ \frac{dD_{agg}}{dP} = \underbrace{\left( \frac{\partial D}{\partial P} + \frac{\partial D}{\partial \varsigma}\frac{\partial \varsigma}{\partial P} \right)}_{(-)} + \underbrace{\left( \frac{\partial D}{\partial \varsigma}\frac{\partial \varsigma}{\partial H}\frac{dH}{dP} \right)}_{(+)} $$
The first term (Direct Price Effect) is negative. The second term (Indirect Security-Feedback Effect) is positive. For the total derivative to be negative, the absolute value of the first term must be greater than the second term:
$$ \left| \frac{\partial D}{\partial P} + \frac{\partial D}{\partial \varsigma}\frac{\partial \varsigma}{\partial P} \right| > \frac{\partial D}{\partial \varsigma}\frac{\partial \varsigma}{\partial H}\frac{dH}{dP} $$
This is the condition in Equation \eqref{eq:uniqueness_model}. If this holds globally, $dD_{agg}/dP < 0$, and the strictly decreasing function can cross the constant $Q$ at most once, proving uniqueness.

\end{document}